\newtheorem{definition}{Definition}
\newtheorem{theorem}{Theorem}
\newenvironment{proof}{\noindent\textbf{Proof:}}{\hfill$\Box$}
\begin{document}
%
%
%

\title{On Byzantine Broadcast in \\ Loosely Connected Networks}
\author{Alexandre Maurer$^1$ \and S\'{e}bastien Tixeuil$^{1,2}$}
\date{\small
$^1$ UPMC Sorbonne Universit\'{e}s, LIP6, LINCS, France\\
$^2$ Institut Universitaire de France\\
\texttt{Alexandre.Maurer@lip6.fr}\\
\texttt{Sebastien.Tixeuil@lip6.fr}
}

\maketitle              

\begin{abstract}

We consider the problem of reliably broadcasting information in a multihop asynchronous network that is subject to Byzantine failures. Most existing approaches give conditions for perfect reliable broadcast (all correct nodes deliver the authentic message and nothing else), but they require a highly connected network. An approach giving only probabilistic guarantees (correct nodes deliver the authentic message with high probability) was recently proposed for loosely connected networks, such as grids and tori. Yet, the proposed solution requires a specific initialization (that includes global knowledge) of each node, which may be difficult or impossible to guarantee in self-organizing networks -- for instance, a wireless sensor network, especially if they are prone to Byzantine failures.

In this paper, we propose a new protocol offering guarantees for loosely connected networks that does not require such global knowledge dependent initialization. In more details, we give a methodology to determine whether a set of nodes will always deliver the authentic message, in any execution. Then, we give conditions for perfect reliable broadcast in a torus network. Finally, we provide experimental evaluation for our solution, and determine the number of randomly distributed Byzantine failures than can be tolerated, for a given correct broadcast probability.

\end{abstract}

\section{Introduction}

In this paper, we study the problem of reliably broadcasting information in a network that is subject to attacks or failures. Those are an important issue in a context where networks grow larger and larger, making the possibility of failure occurrences more likely. Many models of failures and attacks have been studied so far, but the most general model is the \emph{Byzantine} model \cite{LSP82j}: some nodes in the network may exhibit arbitrary behavior. In other words, all possible behaviors must be anticipated, including the most malicious strategies. The generality of this model encompasses a rich panel of security applications.

In the following, we assume that a correct node (the \emph{source}) broadcasts a message in a network that may contain Byzantine nodes. We say that a correct node \emph{delivers} a message, when it considers that this actually is the message broadcasted by the source.

\paragraph{Related works.}

Many Byzantine-robust protocols are based on \emph{cryptography} \cite{CL99c,DFS05c}: the nodes use digital signatures or certificates. Therefore, the correct nodes can verify the validity of received informations and authenticate the sender across multiple hops. However, this approach weakens the power of Byzantine nodes, as they ignore some cryptographic secrets: their behavior is not totally arbitrary.
Moreover, in some applications such as sensor networks, the nodes may not have enough resources to manipulate digital signatures.
Finally, cryptographic operations require the presence of a trusted infrastructure, such as secure channels to a key server or a public key infrastructure.
In this paper, we focus on non-cryptographic and totally distributed solutions: no element of the network is more important than another, and all elements are likely to fail.

Cryptography-free solutions have first been studied in completely connected networks~\cite{LSP82j,AW98b,MMR03j,MRRS01c,MS03j}: a node can directly communicate with any other node, which implies the presence of a channel between each pair of nodes. Therefore, these approaches are hardly scalable, as the number of channels per node can be physically limited. We thus study solutions in partially connected networks, where a node must rely on other nodes to broadcast informations.

Dolev~\cite{D82j} considers Byzantine agreement on arbitrary graphs, and states that for agreement in the presence of up to $k$ Byzantine nodes, it is necessary and sufficient that the network is $(2k+1)$-connected and the number of nodes in the system is at least $3k+1$. Also, this solution assumes that the topology is known to every node, and that nodes are scheduled according to the synchronous execution model. Nesterenko and Tixeuil~\cite{NT09j} relax both requirements (the topology is unknown and the scheduling is asynchronous) yet retain $2k+1$ connectivity for resilience and $k+1$ connectivity for detection (the nodes are aware of the presence of a Byzantine failure). In sparse networks such as a grid (where a node has at most four neighbors), both approaches can cope only with a single Byzantine node, independently of the size of the grid. More precisely, if there are two ore more Byzantine nodes \emph{anywhere} in the grid, there always exists a possible execution where no correct node delivers the authentic message.

Byzantine-resilient broadcast was also investigated in the context of \emph{radio networks}: each node is a robot or a sensor with a physical position. A node can only communicate with nodes that are located within a certain radius. Broadcast protocols have been proposed \cite{K04c,BV05c} for nodes organized on a grid. However, the wireless medium typically induces much more than four neighbors per node, otherwise the broadcast does not work (even if all nodes are correct). Both approaches are based on a local voting system, and perform correctly if every node has less than a $1/4\pi$ fraction of Byzantine neighbors. This criterion was later generalized \cite{PP05j} to other topologies, assuming that each node knows the global topology. Again, in loosely connected networks, the local constraint on the proportion of Byzantine nodes in any neighborhood may be difficult to assess.

A notable class of algorithms tolerates Byzantine failures with either space~\cite{MT07j,NA02c,SOM05c} or time~\cite{MT06cb,DMT11cb,DMT11j,DMT10cd,DMT10ca} locality. Yet, the emphasis of space local algorithms is on containing the fault as close to its source as possible. This is only applicable to the problems where the information from remote nodes is unimportant (such as vertex coloring, link coloring or dining philosophers). Also, time local algorithms presented so far can hold at most one Byzantine node and are not able to mask the effect of Byzantine actions. Thus, the local containment approach is not applicable to reliable broadcast.

All aforementioned results rely on strong \emph{connectivity} and Byzantine proportions assumptions in the network. In other words, tolerating more Byzantine failures requires to increase the connectivity, which can be a heavy constraint in a large network. To overcome this problem, a probabilistic approach for reliable broadcast has been proposed in \cite{CtrZ}. In this setting, the distribution of Byzantine failures is assumed to be random. This hypothesis is realistic in various networks such as a peer-to-peer overlays, where the nodes joining the network are not able to choose their localization, and receive a randomly generated identifier that determines their location in the overlay. Also, it is considered acceptable that a small minority of correct nodes are fooled by the Byzantine nodes. With these assumptions, the network can tolerate~\cite{CtrZ} a number of Byzantine failures that largely exceeds its connectivity. Nevertheless, this solution requires to define many sets of nodes (called \emph{control zones}~\cite{CtrZ}) before running the protocol: each node must initially know to which control zones it belongs. This may be difficult or impossible in certains types of networks, such as a self-organized wireless sensor network or a peer-to-peer overlay. 

\paragraph{Our contribution.}

In this paper, we propose a broadcast protocol performing in loosely connected networks subject to Byzantine failures that relaxes the aforementioned constraint -- no specific initialization is required for the nodes. This protocol is described in Section~\ref{sec_desc}. Further, we prove general properties on this protocol, and use them to give both deterministic and probabilistic guarantees.

In Section~\ref{sec_prop}, we give a sufficient condition for \emph{safety} (no correct node delivers a false message). This condition is not based on the number, but on the \emph{distance} (with respect to the number of hops) between Byzantine failures. Then, we give a methodology to construct -- node by node -- a set of correct nodes that will always deliver the authentic message, in any possible execution.

In Section~\ref{sec_torus}, we consider a particular loosely connected network: the \emph{torus}, where each node has exactly four neighbors. We give a sufficient condition to achieve perfect reliable broadcast on such a network (all correct nodes deliver the authentic message).

In Section~\ref{sec_exp}, we make an experimental evaluation of the protocol on \emph{grid} networks. We give a methodology to estimate the probability that a correct node delivers the authentic message, for a given number of Byzantine failures. This way, we can determine the maximal number of failures that the network can hold, to achieve a given probabilistic guarantee.

\section{Description of the protocol}
\label{sec_desc}

In this section, we provide an informal description of the protocol. Then, we precise our notations and hypotheses, and give the algorithm that each correct node must follow.

\subsection{Informal description}

The network is described by a set of processes, called \emph{nodes}. Some pairs of nodes are linked by a \emph{channel}, and can send messages to each other: we call them \emph{neighbors}. The network is \emph{asynchronous}: the nodes can send and receive messages at any time.

A particular node, called the \emph{source}, wants to broadcast an information $m$ to the rest of the network. In the ideal case, the source would send $m$ to its neighbors, which will transmit $m$ to their own neighbors -- and so forth, until every node receives $m$. In our setting however, some nodes -- except the source --- can be malicious (\emph{Byzantine}) and broadcast false informations to  the network. Of course, a correct node cannot know whether a neighbor is Byzantine.

To limit the diffusion of false messages, we introduce a \emph{trigger} mechanism: when a node $p$ receives a message $m$, it must wait the reception of a \emph{trigger message} to accept and retransmit $m$. The \emph{trigger message} informs $p$ that another node, located at a reasonable distance, has already accepted $m$. This distance is the number $H$ of channels (or \emph{hops}) that the trigger message can cross. This is illustrated in Figure~\ref{fig:trigger}-a.

The underlying idea is as follows: if the Byzantine nodes are sufficiently spaced, they will never manage to broadcast false messages. Indeed, to broadcast a false message, a Byzantine node requires an accomplice to broadcast the corresponding trigger message (see Figure~\ref{fig:trigger}-b). However, if this accomplice is distant from more than $H+1$ hops, the trigger message will never reach its target, and the false message will never be accepted (see Figure~\ref{fig:trigger}-c).

\begin{figure}
\begin{center}
\includegraphics[width=\textwidth]{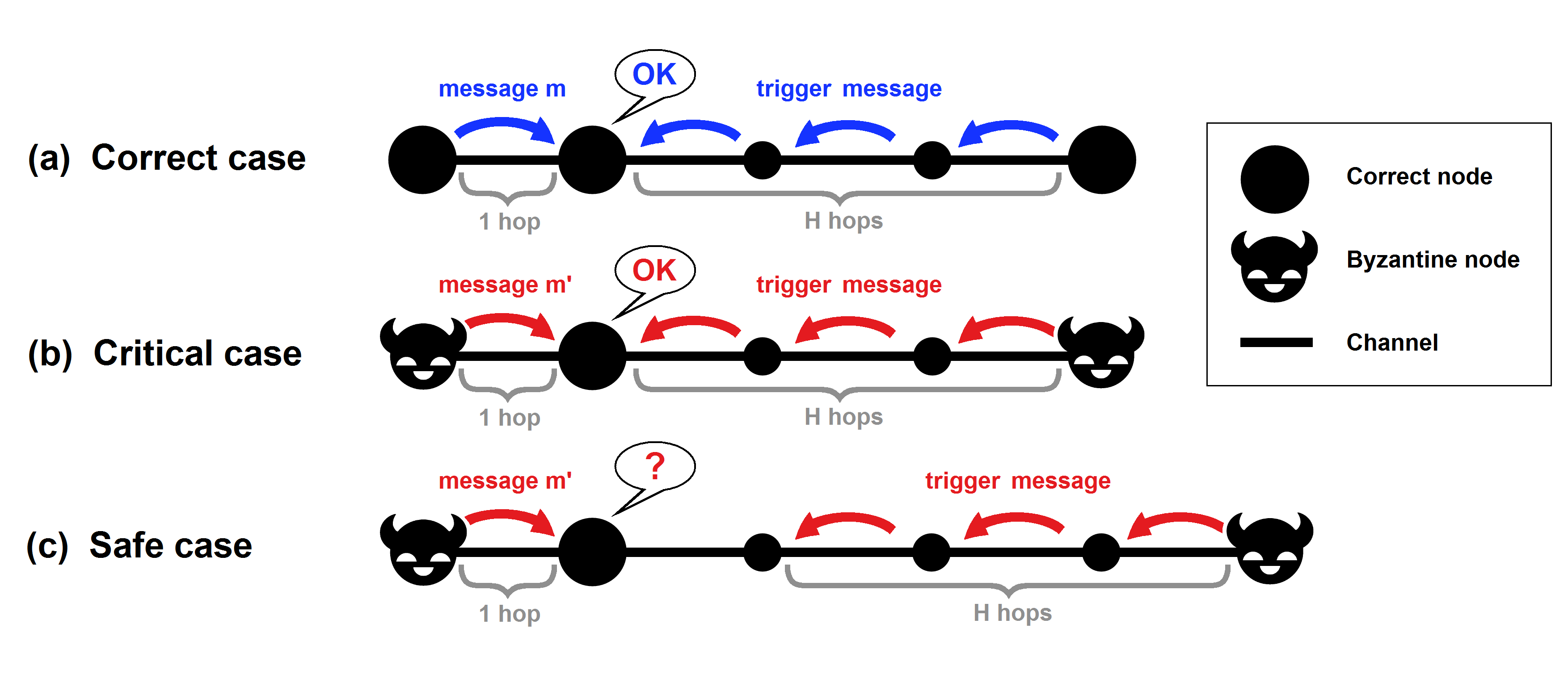}
\caption{Illustration of the trigger mechanism} 
\label{fig:trigger}
\end{center}
\end{figure}

\subsection{Notations and Hypotheses}

Let $(G,E)$ be a non-oriented graph representing the topology of the network. $G$ denotes the \emph{nodes} of the network. $E$ denotes the \emph{neighborhood} relationship. A node can only send messages to its neighbors. Some nodes are $correct$ and follow the protocol described thereafter. We consider that all other nodes are totally unpredictable (or \emph{Byzantine}) and may exhibit an arbitrary behavior.

\paragraph{Hypotheses}

We consider an asynchronous message passing network: any message sent is eventually received, but it can be at any time. We assume that, in an infinite execution, any process is activated inifinitely often. However, we make no hypothesis on the order of activation of the processes. Finally, we assume local topology knowledge: when a node receives a message from a neighbor $p$, it knows that $p$ is the author of the message. Therefore, a Byzantine node cannot lie about its identity to its direct neighbors. This model is referred to as the ``oral'' model in the literature (or \emph{authenticated channels}).

\paragraph{Messages formalism}

In the protocol, two types of messages can be exchanged:
\begin{itemize}
\item \emph{Standard messages}, of the form $(m)$: a message claiming that the source broadcasted the information $m$.
\item \emph{Trigger messages}, of the form $(m,S)$: a message claiming that a node has delivered $m$. The set $S$ should contain the identifiers of the nodes visited by this message.

\end{itemize}

The protocol is characterized by a parameter $H \geq 1$: the maximal number of hops that a trigger message can cross. Typically, this limit is reached when $S$ contains more than $H-1$ nodes. This parameter is known by all correct nodes.

\paragraph{Local memories}

Each correct node $p$ maintains two dynamic sets, initially empty:

\begin{itemize}
\item  $Wait$: the set of standard messages received, but not yet accepted. When $(m,q) \in Wait$, it means that $p$ received a standard message $(m)$ from a neighbor $q$.
\item $Trig$: set of trigger messages received. When $(m,S) \in Trig$, it means that $p$ received a trigger message $(m,S-\{q\})$ from a neighbor $q$.
\end{itemize}

\paragraph{Vocabulary}

We will say that a node $multicasts$ a message when it sends it to all its neighbors.
A node $delivers$ a message $m$ when its consider that it is the authentic information broadcast by the source.
In the remaining of the paper, we call $D$ the shortest number of hops between two Byzantine nodes. For instance, $D = 4$ in Figure~\ref{fig:trigger}-b, and $D = 5$ in Figure~\ref{fig:trigger}-c.

\label{prelim_proto}

\subsection{Local Execution of the Protocol}

\label{localex}

Initially, the source multicasts $m$ and $(m,\o)$. Then, each correct node follows these three rules:

\begin{itemize}
\item RECEPTION -- When a standard message $(m)$ is received from a neighbor $q$: if $q$ is the source, deliver $m$, then multicast $(m)$ and $(m,\o)$; else, add $(m,q)$ to the set $Wait$.

\item TRANSMISSION -- When a trigger message $(m,S)$ is received from a neighbor $q$: if $q \notin S$ and $card(S) \leq H-1$, add $(m,S \cup \{q\})$ to the set $Trig$ and multicast $(m,S \cup \{q\})$.

\item DECISION -- When there exists $(m,q,S)$ such that $(m,q) \in Wait$, $(m,S) \in Trig$ and $q \notin S$: deliver $m$, then multicast $(m)$ and $(m,\o)$.
\end{itemize}

\section{Protocol Properties}

\label{sec_prop}

In this section, we give conditions about the placement of Byzantine nodes that guarantee network \emph{safety} (that is, no correct node ever delivers a false message). Then, we give a methodology to compute a set of nodes that always delivers authentic messages, in any possible execution. Remind that correct nodes do \emph{not} know the actual positions of Byzantine nodes.

\subsection{Network Safety}
\label{sec:safety}

The following theorem guarantees network safety, provided that Byzantine node are sufficiently spaced. This condition depends on the parameter $H$ of the protocol, and on the distance $D$ (see \ref{prelim_proto}). We also show that the condition on $D$ is tight for our protocol.

Notice that safety does not guarantee that correct nodes actually \emph{deliver} the authentic message. This aspect is studied in \ref{sec:reliability}.

\begin{theorem}[Network Safety]
\label{thsafe}
If $D \geq H+2$, no correct nodes delivers a false message.
\end{theorem}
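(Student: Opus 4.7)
The plan is a proof by contradiction. Suppose to the contrary that some correct node delivers a false message, and let $p$ be the \emph{first} correct node (in the real-time order of events) ever to deliver a false $m'\neq m$. Since the source is correct and only multicasts the authentic $m$ and $(m,\emptyset)$, and since $(m',q)\in Wait$ implies $q$ is not the source, $p$ must use DECISION with some $(m',q)\in Wait$ and $(m',S)\in Trig$ satisfying $q\notin S$ and $|S|\leq H$. First I would argue that $q$ must be Byzantine: a correct non-source $q$ would have multicast $(m')$ only after itself delivering $m'$ through DECISION, which would occur strictly before $p$'s delivery and contradict the minimality of $p$.

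Next I would trace the trigger back along its physical relay path to locate a second Byzantine node. The entry $(m',S)\in Trig$ was stored by $p$ after receiving $(m',S\setminus\{r\})$ from a neighbor $r\in S$; if $r$ is correct, TRANSMISSION forces $r$ to have received $(m',S\setminus\{r,r'\})$ from some previous neighbor $r'\in S\setminus\{r\}$, and so on inductively. This chain of correct relays cannot terminate at the source (who never emits a trigger for $m'$) nor at a correct prior deliverer of $m'$ (by minimality of $p$), so it must bottom out at a Byzantine node $B$, giving a physical path $B\to c_1\to\cdots\to c_k\to p$ where $c_1,\dots,c_k$ are correct. Each successive $S\cup\{q\}$ update forces all of $B,c_1,\dots,c_k$ into the stored set, so $\{B,c_1,\dots,c_k\}\subseteq S$ and hence $k+1\leq|S|\leq H$.

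To conclude I would combine the two Byzantine witnesses. The graph distance from $B$ to $p$ is at most $k+1\leq H$, and $q$ is a neighbor of $p$, so the distance between the Byzantine nodes $q$ and $B$ is at most $k+2\leq H+1$. Because $B\in S$ while $q\notin S$ we have $q\neq B$, so $q$ and $B$ are two \emph{distinct} Byzantine nodes at graph distance at most $H+1<H+2\leq D$, contradicting the definition of $D$ as the minimum distance between two Byzantine nodes.

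The main obstacle I expect is formalizing the traceback in the second step: one must carefully justify that every node on the physical relay path, including the Byzantine originator $B$ and the degenerate case $k=0$ in which the direct sender $r$ itself plays the role of $B$, ends up in $S$. The key observation is that each correct relay under TRANSMISSION inserts the identity of its \emph{immediate} sender, so a Byzantine node that injects an arbitrary initial set cannot avoid being listed as soon as the next correct node touches the trigger, and this pinning propagates unchanged all the way to $p$.
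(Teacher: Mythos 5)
Your proposal is correct and follows essentially the same traceback argument as the paper: identify $q$ as a Byzantine sender of the standard message, walk the trigger message back through correct relays using the fact that each TRANSMISSION step adds exactly one new node to the stored set, and combine the cardinality bound $card(S)\leq H$ with the spacing hypothesis $D\geq H+2$. The only (cosmetic) difference is the direction of the final contradiction: the paper uses $D\geq H+2$ to force every relay in the chain to be correct and then contradicts $card(S)\leq H$, whereas you use $card(S)\leq H$ to force the chain to terminate at a Byzantine originator within $H+1$ hops of $q$ and then contradict the definition of $D$.
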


\begin{proof}  The proof is by contradiction. Let us suppose the opposite : $D \geq H+2$, and at least one correct node delivers a false message.
Let $u$ be the first correct node to deliver a false message, and let $m'$ be this message.

No correct node can deliver $m'$ in RECEPTION, as the source did not send $m'$. So $u$ delivered $m'$ in DECISION, implying that there exists $q$ and $S$ such that $(m',q) \in u.Wait$, $(m',S) \in u.Trig$ and $q \notin S$.

The statement $(m',q) \in u.Wait$ implies that $u$ received $(m')$ from a neighbor $q$ in RECEPTION. Let us suppose that $q$ is correct. Then, $q$ sent $(m')$ in DECISION, implying that $q$ delivered $m'$. This is impossible, as $u$ is the first correct node to deliver $m'$. So $q$ is necessarily Byzantine.

Now, let us prove the following property $\mathcal{P}_i$ by recursion,
for $1 \leq i \leq H+1$: a correct node $u_i$, at $i$ hops or less from $q$, received a message $(m',S_i)$, and $card(S) = card(S_i) + i$.

\begin{itemize}
\item First, let us show that $\mathcal{P}_1$ is true.
The statement $(m',S) \in u.Trig$ implies that $u$ received $(m',\mathcal{X})$ from a neighbor $x$ in TRANSMISSION, with
$S = \mathcal{X} \cup \{x\}$
and
$x \notin \mathcal{X}$,
So $card(S) = card(\mathcal{X}) + 1$.
Therefore, $\mathcal{P}_1$ is true if we take $u_1 = u$ and $S_1 = \mathcal{X}$.
Besides, it is also necessary that $card(\mathcal{X}) \leq H - 1$, so $card(S) \leq H$.

\item Let us suppose that $\mathcal{P}_i$ is true, with $i \leq H$.
The node $u_i$ received $(m',S_i)$ from a node $x$, so $x$ is at $i+1$ hops or less from $q$.
Let us suppose that $x$ is Byzantine. Then, according to the previous statement, $D \leq i+1 \leq H+1$, contradicting our hypothesis. So $x$ is necessarily correct.

Node $x$ could not have sent $(m',S_i)$ in RECEPTION or DECISION, as $u$ is the first correct node to deliver $m'$. 
So this happened in TRANSMISSION, implying that $x$ received $(m',\mathcal{Y})$ from a node $y$, with $S_i = \mathcal{Y} \cup \{y\}$ and $y \notin \mathcal{Y}$.
So $card(S_i) = card(\mathcal{Y}) + 1$, and $card(S) = card(\mathcal{Y}) + i + 1$.
Therefore, $\mathcal{P}_{i+1}$ is true if we take $u_{i+1} = x$ and $S_{i+1} = \mathcal{Y}$.
\end{itemize}
Overall, $\mathcal{P}_{H+1}$ is true and $card(S) = card(S_{H+1}) + H+1 \geq H+1$. But, according to a previous statement, $card(S) \leq H$. This contradiction completes the proof.
\end{proof}

As a complementary result, let us show that the bound $D \geq H + 2$ is tight for our protocol.

\begin{theorem} [Tight bounds for safety]
\label{thunsafe}
If $D = H + 1$, some correct nodes may deliver a false message.
\end{theorem}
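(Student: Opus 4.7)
The plan is to prove the statement constructively: exhibit a specific network topology in which $D = H + 1$, together with an asynchronous execution in which some correct node delivers a false message. The simplest choice is the path graph $b_1 - w_1 - w_2 - \cdots - w_H - b_2$, where the two endpoints $b_1$ and $b_2$ are Byzantine and the $H$ intermediate nodes are correct. The shortest inter-Byzantine distance in this graph is exactly $H + 1$, matching the hypothesis. A correct source can be attached as an extra neighbor of any $w_i$; its honest broadcast of the authentic message is independent of the Byzantine actions described below and does not interfere with the scenario.

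The adversarial execution proceeds in two parallel parts. First, $b_1$ sends the standard message $(m')$ directly to $w_1$, so $(m', b_1)$ enters $w_1.Wait$. Second, $b_2$ emits the trigger $(m', \emptyset)$ to $w_H$, and the adversary schedules this trigger to travel along the path toward $w_1$. A short induction on the hop count $k$ (analogous to the $\mathcal{P}_i$ induction of Theorem~\ref{thsafe}, but running forward from $b_2$) shows that at the $k$-th step the node $w_{H - k + 1}$ receives $(m', S_k)$ with $card(S_k) = k - 1$, so the condition $card(S_k) \leq H - 1$ of the TRANSMISSION rule is satisfied, and $w_{H - k + 1}$ stores and forwards the augmented message with set of size $k$.

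Taking $k = H$, the trigger arrives at $w_1$ with set $S = \{b_2, w_H, w_{H-1}, \ldots, w_2\}$ of cardinality $H$, stored in $w_1.Trig$; crucially, $b_1 \notin S$. Hence $w_1$ holds $(m', b_1) \in Wait$ together with $(m', S) \in Trig$ satisfying $b_1 \notin S$, the DECISION rule fires, and $w_1$ delivers the false message $m'$, completing the counterexample.

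The only delicate point in the argument is the bookkeeping at the last hop: $w_1$ receives a trigger with $card(S) = H - 1$, the largest value still allowed by the TRANSMISSION rule. One additional correct node on the path (corresponding to $D \geq H + 2$) would raise the received set size to $H$ at a neighbor of $b_1$, blocking the trigger exactly as Theorem~\ref{thsafe} guarantees. This confirms that the threshold on $D$ is tight for the protocol.
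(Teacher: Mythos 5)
Your proposal is correct and follows essentially the same construction as the paper: two Byzantine nodes at the ends of a path of $H+1$ hops, one injecting the standard message $(m')$ at a neighbor and the other launching the trigger, which legally traverses the $H$ intermediate correct nodes and fires DECISION at that neighbor. Your version merely spells out the hop-count bookkeeping that the paper leaves implicit.
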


\begin{proof}
Let $b$ and $c$ be two Byzantine nodes distant from $H + 1$ hops. Let $(p_0,...,p_{H+1})$ be a path of $H + 1$ hops, with $p_0 = b$ and $p_{H+1} = c$.
Then, $b$ can send a standard message $(m')$ to $p_1$, and $c$ can send the trigger message for $m'$ trough $H$ hops. Therefore, it is possible that $p_1$ delivers the false message, and the network is not safe.
\end{proof}

\subsection{Network Reliability}
\label{sec:reliability}

Here, we suppose that the safety conditions determined in Section~\ref{sec:safety} are satisfied: no correct node can deliver a false message. We now give a methodology to construct a set $S$ of nodes that always delivers the authentic message. 

\begin{definition}[Reliable node set]
\label{defrel}
For a given source node and a given distribution of Byzantine nodes, a set of correct nodes $S$ is \emph{reliable} if all nodes in $S$ eventually deliver authentic messages in any possible execution.
\end{definition}

\begin{definition}[Correct path]
A $N$-hops \emph{correct path} is a sequence of distinct correct nodes $(p_0, \dots , p_N)$ such that, $\forall i \leq N-1$, $p_i$ and $p_{i+1}$ are neighbors.
\end{definition}

Notice that, according to RECEPTION (see \ref{localex}), the set formed by the source and its correct neighbors is reliable. The following theorem permits to decide whether a given node $p$ can be added to a reliable set $S$. So, a reliable set can be extended node by node, and can potentially contain the majority or the totality of the correct nodes.

\begin{theorem} [Reliable set determination]
Let us suppose that the hypotheses of Theorem~\ref{thsafe} (Network Safety) are all satisfied.
Let $S$ be a reliable node set, and $p \notin S$ a node with a neighbor $q \in S$.
If there exists a correct path of $H$ hops or less between $p$ and a node $v \in S$ (all nodes of the path being distinct from $q$), then $S \cup \{p\}$ is also a reliable node set.
\label{thcom}
\end{theorem}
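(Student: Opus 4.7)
The plan is to show directly that $p$ eventually satisfies the hypothesis of the DECISION rule for the authentic message $m$, and then invoke Theorem~\ref{thsafe} to conclude that the delivered message must be $m$ itself. Two ingredients are needed: an entry $(m,q) \in p.Wait$ coming via the neighbor $q$, and a trigger $(m,S') \in p.Trig$ with $q \notin S'$. Both will be produced by exploiting the reliability of $S$ at its two distinguished points $q$ and $v$.

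For the first ingredient, since $q \in S$ and $S$ is reliable, $q$ eventually delivers $m$; at that moment $q$ multicasts $(m)$, and by asynchronous eventual delivery, $p$ receives $(m)$ from $q$. By RECEPTION (and since $q$ is not the source, or else we would already have $p$ adjacent to the source and the claim is immediate), this puts $(m,q)$ into $p.Wait$.

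For the second ingredient, I would walk a trigger message backwards along the given correct path $p=p_0,p_1,\dots,p_k=v$, with $k\le H$. Since $v \in S$ is reliable, $v$ eventually delivers $m$ and multicasts $(m,\emptyset)$. I then prove by finite induction on $i$, going from $i=k-1$ down to $i=0$, that $p_i$ receives $(m,\{p_k,p_{k-1},\dots,p_{i+2}\})$ from $p_{i+1}$, and that the TRANSMISSION rule fires so that $p_i$ adds $(m,\{p_k,\dots,p_{i+1}\})$ to $p_i.Trig$ and multicasts it. The two TRANSMISSION preconditions are checked thus: the sender $p_{i+1}$ does not belong to the received set $\{p_k,\dots,p_{i+2}\}$ because the path nodes are pairwise distinct; and the cardinality $k-i-1$ of the received set is at most $k-1 \le H-1$. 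Applying the induction at $i=0$ yields $(m,S') \in p.Trig$ with $S'=\{p_1,\dots,p_k\}$. Because the hypothesis requires every path node to differ from $q$, we have $q \notin S'$.

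With both $(m,q) \in p.Wait$ and $(m,S') \in p.Trig$ satisfying $q \notin S'$, the DECISION rule eventually fires at $p$, so $p$ delivers some message. By Theorem~\ref{thsafe}, $p$ cannot deliver a false message, hence it delivers the authentic $m$. Since this holds in every execution, $S\cup\{p\}$ is reliable. The main subtlety — and the only real obstacle — is bookkeeping the TRANSMISSION preconditions at each hop; once it is observed that the propagating set only ever contains path nodes strictly further along than the current sender, and that its size grows from $0$ to at most $k-1\le H-1$ when received, both checks pass automatically, and asynchrony adds no difficulty because the required sequence of events is a chain of single-message local reactions that must eventually occur.
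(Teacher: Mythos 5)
Your proposal is correct and takes essentially the same route as the paper's own proof: obtain $(m,q)\in p.Wait$ from $q$'s delivery and multicast of $(m)$, propagate the trigger message hop by hop along the correct path from $v$ to $p$ by induction (verifying the two TRANSMISSION preconditions at each step, exactly as the paper does with its sets $S_i$), and then fire DECISION at $p$. The only differences are cosmetic (you index the path from $p$ to $v$ and induct downward, and you invoke Theorem~\ref{thsafe} at the end where the paper notes up front that only $m$ can be delivered), so no further comparison is needed.
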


\begin{proof}
Let $m$ be the message broadcast by the source. As the hypotheses of Theorem~\ref{thsafe} are satisfied, the correct nodes can only deliver $m$. 
As $q$ and $v$ are in a reliable node set, there exists a configuration where $q$ and $v$ have delivered $m$.
This implies that $q$ and $v$ have multicast $(m)$ and $(m,\o)$.

So $p$ eventually receives $(m)$ from $q$.
If $q$ is the source, $p$ delivers $m$, completing the proof.
Now, let us suppose that $q$ is not the source.
Then, $p$ eventually adds $(m,q)$ to its set $Wait$ in RECEPTION.

Let $(v_0,\dots,v_N)$ be a $N$-hops correct path, with $v_0 = v$, $v_N = p$ and $N \leq H$. Let $S_i$ be the set of nodes defined by $S_0 = \o$ and $S_i = \{v_0,\dots,v_{i-1}\}$ for $1 \leq i \leq N$.
Let us prove the following property $\mathcal{P}_i$ by induction, for $0 \leq i \leq N-1$: Node $v_i$ eventually multicasts $(m,S_i)$.

\begin{itemize}
\item $\mathcal{P}_0$ is true, as $v_0 = v$ has multicast $(m,\o)$.

\item Let us suppose that $\mathcal{P}_i$ is true, with $i \leq N-2$.
Let $e$ be an execution where $v_i$ has multicast $(m,S_i)$.
Then, $v_{i+1}$ eventually receives $(m,S_i)$.
According to TRANSMISSION, as $card(S_i) \leq H-1$ and $v_i \notin S_i$,  $v_{i+1}$ eventually multicast $(m,S_{i+1})$. Therefore, $\mathcal{P}_{i+1}$ is true. 

\end{itemize}

So $\mathcal{P}_{N-1}$ is true and $v_{N-1}$ eventually multicasts $(m,S_{N-1})$.
Therefore, $p$ eventually receives $(m,S_{N-1})$.
According to TRANSMISSION, as $card(S_{N-1}) \leq H - 1$ and $v_{N-1} \notin S_{N-1}$, $(m,S_{N-1})$ is eventually added to $p.Trig$.
Thus, we eventually have $(m,q) \in p.Wait$, $(m,S_{N-1}) \in p.Trig$ and $q \notin S_{N-1}$.
So according to DECISION, $p$ eventually delivers $m$.
\end{proof}

\section{A Reliable Torus Network}

\label{sec_torus}

In this section, we refined the general conditions given in section~\ref{sec_prop} for the particular case of torus networks. Torus is good example of a multihop sparse topology, as every node has exactly four neighbors, and is sufficiently regular to permit analytical reasoning. 

\subsection{Preliminaries}

We first recall the definition of the torus topology:

\begin{definition}[Torus network]
\label{defgrid}
A $N \times N$ \emph{torus} network is a network such that:
\begin{itemize}
\item Each node has a unique identifier $(i,j)$ with
$1 \leq i \leq N$ and $1 \leq j \leq N$.
\item Two nodes $(i_{1},j_{1})$ and $(i_{2},j_{2})$ are neighbors if and only if one of these two conditions is satisfied:
\begin{itemize}
\item $i_{1} = i_{2}$ and $ | j_{1}-j_{2} |  = 1$ or $N$.
\item $j_{1} = j_{2}$ and $ | i_{1}-i_{2} | = 1$ or $N$.
\end{itemize}
\end{itemize}
\end{definition}

\paragraph{Tori \emph{vs} grids.}

If we remove the \emph{``or $N$''} from the previous definition, we obtain an arguably more realistic topology: the \emph{grid}. A grid network can easily be represented in a bidimensional space (see Figure~\ref{fig:grid}).

\begin{figure*}
\begin{center}
\includegraphics[width=5cm]{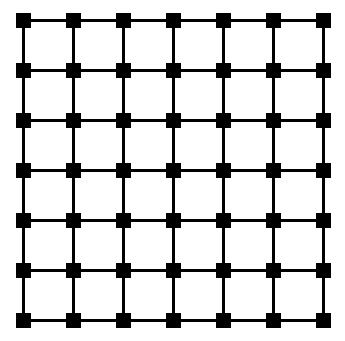}
\caption{Example of grid network: a $7 \times 7$ grid} 
\label{fig:grid}
\end{center}
\end{figure*}

However, no general condition on the distance between Byzantine nodes can guarantee reliable broadcast in the \emph{grid}. Indeed, let us suppose that the node $(2,2)$ is the source, and that the node $(1,2)$ is Byzantine. Then, the node $(1,1)$ has no way to know which node tells the truth between $(1,2)$ and $(2,1)$.

To avoid such border effects, we consider a \emph{torus} network in this part. The grid will be studied in Section~\ref{sec_exp}, with an experimental probabilistic study.

\subsection{A sufficient condition for reliable broadcast}

The main theorem of this section guarantees network safety, again in terms of spacing Byzantine nodes apart. This condition depends on the parameter $H$ of the protocol, and on the distance $D$ (see \ref{prelim_proto}). We also show that the condition on $D$ is tight for our protocol.

\begin{theorem} [Torus reliable broadcast]
Let $T$ be a torus network, and let the parameter of the protocol be $H = 2$.
If $D \geq 5$, all correct nodes eventually deliver the authentic message.
\label{thtorus}
\end{theorem}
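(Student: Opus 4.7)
The strategy is to combine the safety theorem (Theorem~\ref{thsafe}) with the reliable-set extension lemma (Theorem~\ref{thcom}) to build, step by step, a reliable set that eventually contains every correct node of the torus. First, since $D \geq 5 \geq H + 2 = 4$, Theorem~\ref{thsafe} applies and no correct node ever delivers a false message, so only the ``liveness'' half remains: I must show that every correct node eventually delivers the authentic message.

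By the RECEPTION rule, the source and each of its (at most four) correct neighbors form an initial reliable set $S_0$. The plan is to grow $S$ by repeated applications of Theorem~\ref{thcom} until it contains every correct node. Concretely, my target is the following local extension claim: for every correct node $p \notin S$ that has at least one neighbor $q \in S$, there is a correct path of length at most $H = 2$ from $p$ to some $v \in S$ whose internal nodes all differ from $q$. Once this claim is established, a straightforward induction, processing the correct nodes in (say) BFS order from the source, extends $S$ to all correct nodes of the torus.

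The proof of the local extension claim will rest on a short geometric case analysis exploiting how restrictive $D \geq 5$ is: no two Byzantine nodes lie within graph-distance $4$ of each other, so in particular $p$ has at most one Byzantine neighbor, every neighbor of a Byzantine node is correct, and no two Byzantine nodes share a common neighbor or a common second neighbor. I would first dispose of the easy situations in which some neighbor of $p$ other than $q$ already lies in $S$ (a 1-hop path is immediate). Then I would handle the harder situations in which $q$ is the only neighbor of $p$ currently in $S$ by exhibiting an explicit 2-hop detour through one of the three remaining neighbors of $p$; thanks to the Byzantine-sparsity bound, at least two of these three neighbors are correct, and each of them has at most one Byzantine neighbor.

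The main obstacle will be making the case analysis on the torus tight: I must check that in every Byzantine configuration compatible with $D \geq 5$, and for every position of $p$ relative to $q$, at least one of the candidate 2-hop detours out of $p$ actually lands inside a node that the BFS ordering has already inserted into $S$, and does not get blocked by the (at most one) nearby Byzantine node. The crucial leverage is precisely that, since any two Byzantine nodes are at graph-distance at least $5$, a single Byzantine node can spoil at most one of the three candidate detours, leaving at least one viable correct path of length at most~$2$. Once this local extension claim is proved, iterating Theorem~\ref{thcom} yields a reliable set containing every correct node, and combining this with Theorem~\ref{thsafe} gives Theorem~\ref{thtorus}.
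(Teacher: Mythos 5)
Your overall strategy is the same as the paper's: invoke Theorem~\ref{thsafe} for safety, start from the reliable set formed by the source and its correct neighbors, and grow it by iterating Theorem~\ref{thcom}. The gap is in the step you single out as the target: the ``local extension claim'' is false as stated, and the BFS processing order does not repair it. Concretely, let the only Byzantine node be $b=(3,4)$, adjacent to the source $s=(3,3)$ (so $D\geq 5$ holds vacuously), and let $S_0=\{(3,3),(2,3),(4,3),(3,2)\}$ be the initial reliable set. Take $p=(2,4)$: its unique neighbor in $S_0$ is $q=(2,3)$, and the only other node of $S_0$ within two hops of $p$ is $s$ itself, reachable only through $q$ or through $b$. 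So $p$ has a neighbor in $S_0$ yet admits no correct $\leq 2$-hop path into $S_0$ avoiding $q$, and Theorem~\ref{thcom} cannot yet be applied to it. The flaw in your ``crucial leverage'' is that a candidate detour is spoiled not only when the Byzantine node lies on it, but also whenever its endpoint has not yet entered $S$ -- and a single Byzantine node next to the frontier of $S$ casts exactly that kind of shadow over several detours at once. The BFS ordering fails for the same reason: in the example above, $(3,5)$ is at distance $2$ from $s$ but acquires a neighbor in $S$ only after nodes at distance $3$ from $s$ have been inserted, and even within a single BFS level the insertions must be sequenced so as to ``wrap around'' $b$.

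What is missing, then, is precisely the combinatorial core of the theorem: an ordering of the insertions that provably never gets stuck. The paper supplies one by changing the granularity of the induction. It first builds a \emph{correct} path from $s$ to the target node (replacing each Byzantine node on an arbitrary path by the $8$-hop correct square around it, which exists because $D\geq 5$), and then inducts along that path on the claim that \emph{all} correct nodes of the $3\times 3$ block centered at each path node deliver the message; each inductive step is discharged by an exhaustive case analysis inside the $5\times 5$ block around the at most one Byzantine node present, which is exactly where the ``go around $b$'' sequencing is absorbed. (The paper also treats tori of size at most $5\times 5$ separately by exhaustion, a case your sketch does not address.) Your node-by-node plan could likely be salvaged by weakening the claim to ``some insertable node always exists until $S$ is exhausted'' and exhibiting an explicit insertion order around each Byzantine node, but as written the case analysis you propose would fail on the configuration above.
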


\begin{proof}
According to Theorem~\ref{thsafe}, as $H = 2$ and $D \geq 5$, no correct node ever delivers a false message. In the sequel, the expression \emph{proof by exhaustion} designates a large number of trivial proofs that we do not detail, as they present no particular interest.

If the dimensions of the torus are $5 \times 5$ or less,
the proof of reliable broadcast is by exhaustion:
we consider each possible distribution of Byzantine nodes, and use Theorem~\ref{thcom} to show that all correct nodes eventually deliver the authentic message. Now, let use suppose that the dimensions of the torus are greater than $5 \times 5$.

Let $v$ be any correct node. Let $(u_1,\dots,u_n)$ be a path between the source $s$ and $v$. If this path is not correct, we can easily construct a correct path between $s$ and $v$.
Indeed, as $D \geq 5$, there exists a square correct path of $8$ hops around each Byzantine node. So, for each Byzantine node $u_i$ from the path, we replace $u_i$ by the correct path linking $u_{i-1}$ and $u_{i+1}$. Therefore, we can always construct a correct path $(p_1,\dots,p_n)$ between $s$ and $v$.

For a given node $p$, we call $G_{3 \times 3}(p)$ the $3 \times 3$ grid from which $p$ is the central node $(2,2)$, and $G_{5 \times 5}(p)$ the $5 \times 5$ grid from which $p$ is the central node $(3,3)$.
We want to prove the following property $\mathcal{P}_i$ by induction: all correct nodes of $G_{3 \times 3}(p_i)$ eventually deliver the authentic message.

\begin{itemize}
\item We prove $\mathcal{P}_1$ by exhaustion: 
we consider each possible distribution of Byzantine nodes in $G_{3 \times 3}(s)$ with $D \geq 5$, and use Theorem~\ref{thcom} to show that all correct nodes eventually deliver the authentic message.

\item Let us suppose that $\mathcal{P}_i$ is true.
$G_{3 \times 3}(p_{i+1})$ contains $p_i$ and at least two of its neighbors.
As $D \geq 5$, at least one on these neighbors $q$ is correct.
As $p_i$ and $q$ are also in $G_{3 \times 3}(p_{i})$, they eventually deliver the authentic message, according to $\mathcal{P}_i$.

\begin{itemize}

\item Let us suppose that there is no Byzantine node in $G_{3 \times 3}(p_{i+1})$.
Then, we prove $\mathcal{P}_{i+1}$ by exhaustion: 
we consider each possible distribution of Byzantine nodes in $G_{3 \times 3}(p_{i+1})$ with $D \geq 5$, and use Theorem~\ref{thcom} to show that all correct nodes eventually deliver the authentic message.

\item Let us suppose that there are some Byzantine node in $G_{3 \times 3}(p_{i+1})$.
According to our hypothesis, there is at most one Byzantine node $b$ in $G_{3 \times 3}(p_{i+1})$.
Then, all correct nodes of $G_{3 \times 3}(p_{i+1})$ are in $G_{5 \times 5}(b)$ -- so, in particular, $p_i$ and $q$.
As $D \geq 5$, $b$ is the only Byzantine node in $G_{5 \times 5}(b)$.
Then, we prove $\mathcal{P}_{i+1}$ by exhaustion: 
we consider each possible placement of $p_i$ and $q$ in $G_{5 \times 5}(b)$, and use Theorem~\ref{thcom} to show that all correct nodes of $G_{5 \times 5}(b)$ -- and thus, all correct nodes of $G_{3 \times 3}(p_{i+1})$ -- eventually deliver the authentic message.

\end{itemize}

\end{itemize}

So $\mathcal{P}_n$ is true, and $v = p_n$ eventually delivers the authentic message.

\end{proof}

As a complementary result, let us show that the bound $D \geq 5$ is tight for our protocol.

\begin{theorem} [Torus tight bounds]
If $D = 4$, some correct nodes may never deliver the authentic message.
\end{theorem}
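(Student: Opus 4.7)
The plan is to exhibit a torus network, a source $s$, and a placement of Byzantine nodes at pairwise graph distance exactly $4$, together with an admissible asynchronous execution in which some correct node $v$ never satisfies the precondition of the DECISION rule for the authentic message~$m$.

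I would first locate the candidate node $v$ by looking at where the proof of Theorem~\ref{thtorus} becomes tight. In that proof, the critical case splits on whether a Byzantine node lies in $G_{3 \times 3}(p_{i+1})$; when it does, the bound $D \geq 5$ is used to ensure that only one Byzantine sits in the surrounding $G_{5 \times 5}(b)$, which in turn makes the exhaustive case analysis go through. At $D = 4$, a second Byzantine can appear on the boundary of $G_{5 \times 5}(b)$, so I would place the two Byzantines $b_1$ and $b_2$ in precisely such a ``double-Byzantine'' configuration inside a $5 \times 5$ sub-grid of the torus, chosen so that some correct node $v$ in that sub-grid has only one correct neighbor $q$ that can ever carry a fresh delivery of $m$, and every correct path of length at most $H = 2$ from $v$ to another potentially delivering node is forced to traverse $q$.

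With the configuration fixed, I would specify the execution: the source broadcasts as prescribed, both Byzantine nodes remain silent about $m$, and the scheduler delivers messages between correct nodes in an order consistent with the blockage above. I would then verify, by examining every correct 2-hop walk ending at $v$, that each such walk either encounters a silent Byzantine (and dies there) or passes through $q$. Consequently $v.Wait$ for $m$ contains only the pair $(m,q)$ while every element of $v.Trig$ for $m$ contains $q$, so the precondition $q \notin S$ of DECISION never holds; since $v$ is not a neighbor of the source, it also cannot deliver through RECEPTION, and the claim follows. The main obstacle is choosing the relative positions of $s$, $v$, $q$, $b_1$, $b_2$ so that the blockage is airtight: one must simultaneously respect $D = 4$, keep $v$ correct, and ensure that the other correct neighbors of $v$ cannot independently deliver before $v$. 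Once the layout is fixed, the verification is a finite case analysis in the same spirit as the exhaustive arguments used in the proof of Theorem~\ref{thtorus}.
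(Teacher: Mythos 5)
There is a genuine gap, and it is not just the deferred ``finite case analysis'': the specific plan of using \emph{two} Byzantine nodes $b_1,b_2$ cannot succeed. Observe that under the DECISION rule with $H=2$, any correct node with two correct, delivering neighbours $q_1,q_2$ itself delivers (it gets $(m,q_1)\in Wait$ and $(m,\{q_2\})\in Trig$ with $q_1\notin\{q_2\}$); and since $D=4$, no node has two Byzantine neighbours, so every correct node has at least three correct neighbours. Consequently, blocking a node $v$ forces at least two of its correct neighbours to be blocked as well, and the set of non-delivering correct nodes must be sustained by a whole \emph{barrier} of Byzantine nodes. With only two Byzantine nodes at mutual distance $4$, each is an isolated obstacle surrounded by an $8$-cycle of correct nodes; the reliable set of Theorem~\ref{thcom} flows around each obstacle and closes up behind it, so every correct node eventually acquires two delivering neighbours and the whole torus delivers. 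In other words, the ``double-Byzantine inside a $5\times 5$ sub-grid'' configuration you are looking for does not exist, and the sentence ``the main obstacle is choosing the relative positions \dots so that the blockage is airtight'' is deferring precisely the step that fails. The paper's proof instead exhibits a concrete arrangement of \emph{many} Byzantine nodes (Figure~\ref{fig:critical}) placed symmetrically at pairwise distance $4$ around the source, so that only the source's direct neighbours (type $1$) deliver and every type-$2$ node has a unique delivering neighbour $q$ with no correct $2$-hop trigger path avoiding $q$.

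Your local analysis of the DECISION precondition (every correct $2$-hop walk into $v$ must pass through $q$ or die at a silent Byzantine) is the right criterion, but it has to be verified as a \emph{global fixed point}: one must identify the exact set $A$ of nodes that deliver and check that every correct node outside $A$ fails the criterion \emph{given that only nodes of $A$ ever multicast $(m)$ and $(m,\o)$}. You flag this (``ensure that the other correct neighbors of $v$ cannot independently deliver before $v$'') but do not establish it, and the heuristic of reading the tightness off the $G_{5\times 5}(b)$ case of Theorem~\ref{thtorus} is misleading here: the failure of one inductive step only means that one particular propagation route is cut, not that delivery cannot reach $v$ by going around. To repair the proof you would need to produce an explicit barrier configuration with $D=4$ and carry out the (finite) verification that the delivering set is confined to the source and its neighbours.
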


\begin{proof}
Let $T$ be a $N \times N$ torus network, with $N \geq 8$.
Let us consider the example given in Figure~\ref{fig:critical}, where $D = 4$. In this figure, the central node $s$ is the source node. As they are direct neighbors of the source, the node of type $1$ eventually deliver the authentic message. However, the nodes of type $2$ never do so.

\begin{figure}
\begin{center}
\includegraphics[width=6cm]{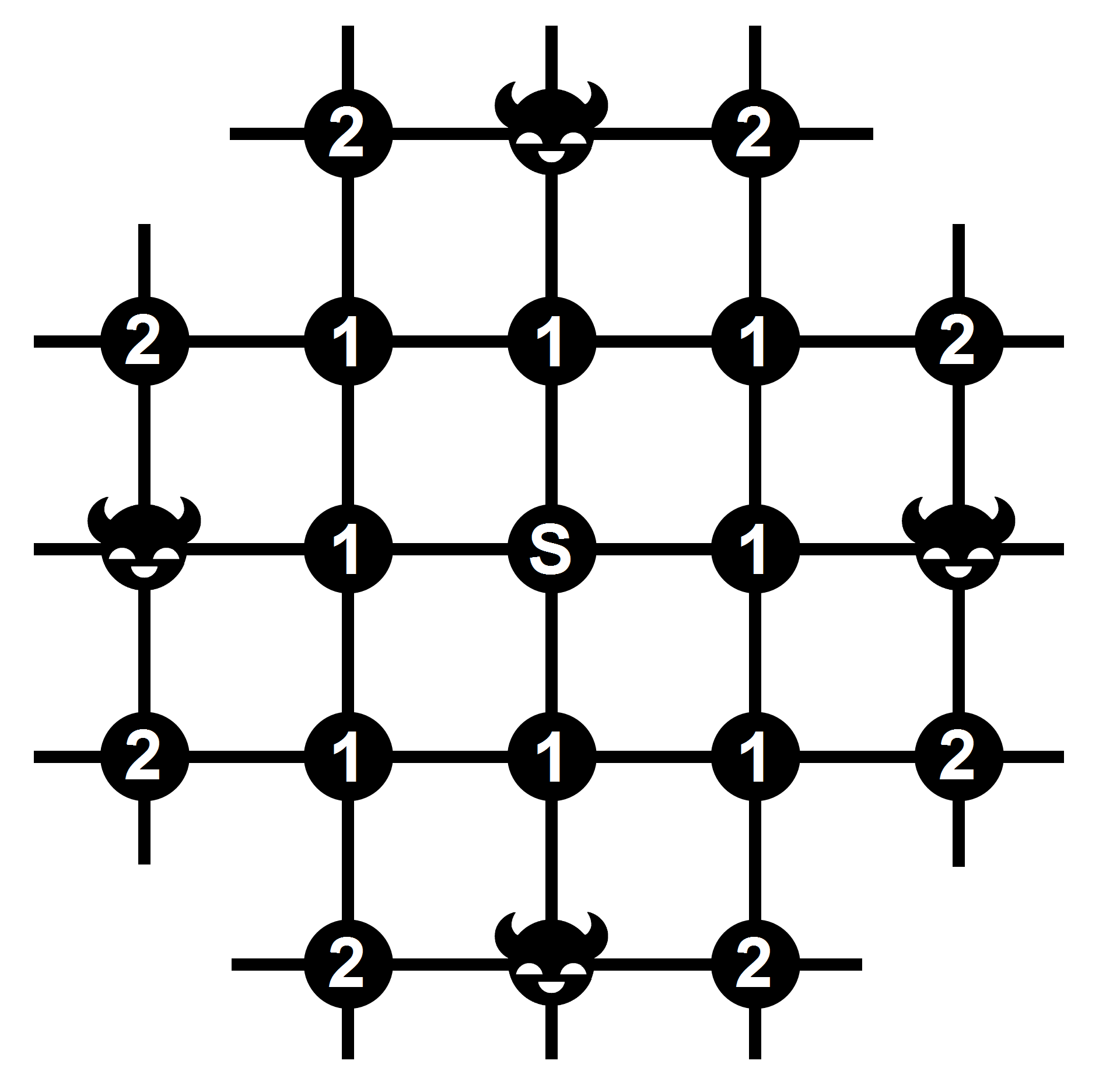}
\caption{Critical case in a torus network} 
\label{fig:critical}
\end{center}
\end{figure}

Indeed, let us consider a node $p$ of type $2$, and its neighbor $q$ of type $1$. To deliver the authentic message, $p$ needs to receive a trigger message from another node of type $1$, by a correct path of $H$ hops that does not contain $q$. But, as $H = 2$, such a path does not exist. Besides, we cannot take $H > 2$, as it would enable some correct nodes to deliver a false message, according to Theorem~\ref{thunsafe}. Therefore, the nodes of type $2$ -- and thus, the other correct nodes -- will never deliver the authentic message.

\end{proof}

Finally, let us discuss possible extensions to a grid-shaped network. We have seen that perfect reliable broadcast was impossible in a grid, due to border effects. However, it is actually possible in a sub-grid extracted from the grid.

More precisely, let $\mathcal{G}$ be a $N \times N$ grid, and $\mathcal{G'}$ a sub-grid containing all the nodes $(i,j)$ of $\mathcal{G}$ such that $4 \leq i \leq N-4$ and $4 \leq j \leq N-4$. Then, the proof of Theorem~\ref{thtorus} is also valid for $\mathcal{G'}$.

It is also the case if we consider \emph{any} particular node in an infinite grid (but not \emph{all} nodes). In other words, a given correct node eventually delivers the authentic message, even if the notion of perfect reliable broadcast does not make sense in an infinite network.

\section{Experimental evaluation}

\label{sec_exp}

In this section, we target quantitative Byzantine resilience evaluation when considering the case of randomly distributed Byzantine failures. We first give a methodology to estimate the number of Byzantine failures that a particular network can tolerate for a given \emph{probabilistic} guarantee. Then, we present experimental results for a \emph{grid} topology.

Notice that only the \emph{placement} of Byzantine failures is probabilistic: once this placement is determined, we must assume that the Byzantine nodes adopt the worst possible strategy, and that the worst possible execution may occur.

\subsection{Methodology}

Let $n_B$ be the number of Byzantine failures, randomly distributed on the network (the distribution is supposed to be uniform). We would like to evaluate the \emph{probability} $P(n_B)$, for a correct node, to deliver the authentic message. 
For this purpose, we use a Monte-carlo method:

\begin{itemize}
\item We generate several random distributions of $n_B$ Byzantine failures.
\item For each distribution, we randomly choose a source node $s$ and a correct node $v$. Then, we use Theorem~\ref{thcom} to construct a \emph{reliable node set} (see Definition~\ref{defrel}). If $v$ is in the reliable node set, it eventually delivers  the authentic message, and the simulation is a success -- else, it is a failure.
\item With a large number of simulations, the fraction of successes will approximate $P(n_B)$.
\end{itemize}

More precisely, we approximate a \emph{lower bound} of $P(n_B)$, as the reliable node set constructed in not necessarily the best. Therefore, we can determine a maximal number of Byzantine failures that can be tolerated for a given guarantee (for instance: $P(n_B) \geq 0.99$).

\subsection{Results}

We run simulations on $N \times N$ grid networks, with a parameter $H = 2$ for the protocol. The results are presented in Figure~\ref{simu}.

\begin{figure}
\begin{center}
\includegraphics[width=\textwidth]{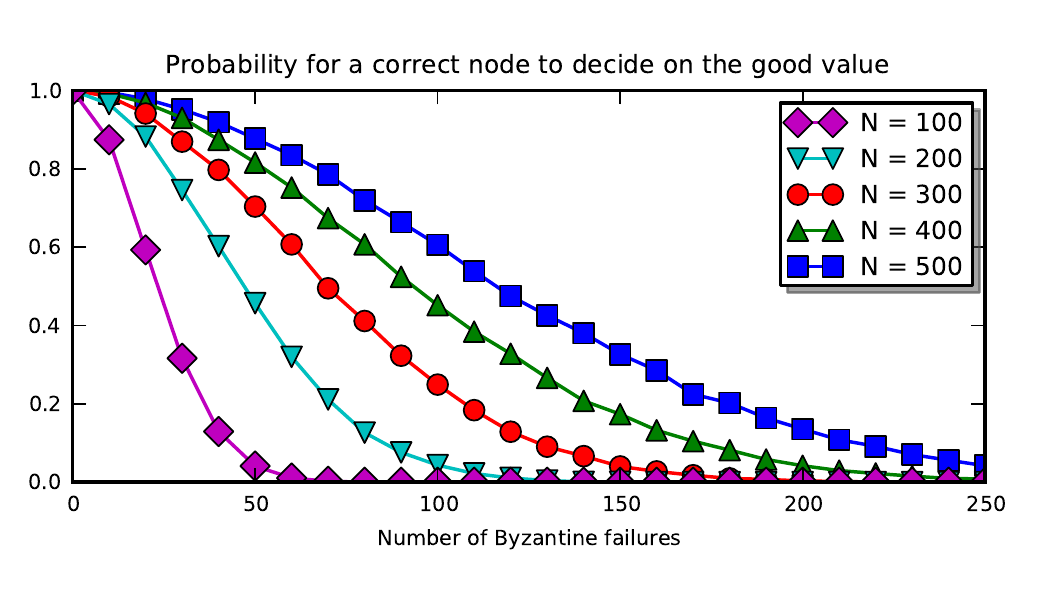}
\caption{Experimental evaluations on $N \times N$ grid networks} 
\label{simu}
\end{center}
\end{figure}

As expected, a larger grid can tolerate more Byzantine failures, as they are more likely to be sufficiently spaced.

To our knowledge, the only existing protocol working on such a sparse topology -- without specific initialization of the nodes -- is Explorer \cite{NT09j}. This protocol consists in a voting system on \emph{node-disjoint} paths between the source and the peers. However, as a node has at most $4$ neighbors, $2$ Byzantine failures can prevent \emph{any} correct node to deliver the authentic message.
Therefore, no guarantee can be given for more than $1$ Byzantine failure.

As in~\cite{CtrZ}, we could have modified Explorer and forced it to use predetermined paths on the grid. However, this would require global topology knowledge. More precisely, in order to use such a tweaked version of Explorer, a node must know its position on the grid and, for a given neighbor, whether it is its upper, lower, left or right neighbor. Those assumptions are not required with our protocol.

On this grid topology, our protocol enables to tolerate \emph{more} than $1$ Byzantine failure with a good probability. For instance, for $N = 500$, we can tolerate up to $14$ Byzantine failures with $P(n_B) \geq 0.99$ (see Figure~\ref{simu}).

\section{Conclusion}

In this paper, we proposed a Byzantine-resilient broadcast protocol for loosely connected networks that does not require any specific initialization of the nodes, nor global topology knowledge. We gave a methodology to construct a reliable node set, then sufficient conditions for perfect reliable broadcast in a sparse topology: the torus. Finally, we presented a methodology to determine the number on randomly distributed Byzantine failures that a network can hold.

Several interesting open questions remain. First, we have the strong intuition that the condition proved on the torus could be generalized to \emph{any} network topology. Another challenging problem is to obtain theoretical probabilistic guarantees, based on global network parameters such as diameter, node degree or connectivity. Third, the tradeoff between global knowledge and the number of Byzantine nodes that can be tolerated requires further attention.

\bibliographystyle{plain}
\bibliography{biblio}

\begin{thebibliography}{10}

\bibitem{AW98b}
H.~Attiya and J.~Welch.
\newblock {\em Distributed Computing: Fundamentals, Simulations, and Advanced
  Topics}.
\newblock McGraw-Hill Publishing Company, New York, May 1998.
\newblock 6.

\bibitem{BV05c}
Vartika Bhandari and Nitin~H. Vaidya.
\newblock On reliable broadcast in a radio network.
\newblock In Marcos~Kawazoe Aguilera and James Aspnes, editors, {\em PODC},
  pages 138--147. ACM, 2005.

\bibitem{CL99c}
Miguel Castro and Barbara Liskov.
\newblock Practical byzantine fault tolerance.
\newblock In {\em OSDI}, pages 173--186, 1999.

\bibitem{D82j}
D.~Dolev.
\newblock The {Byzantine} generals strike again.
\newblock {\em Journal of Algorithms}, 3(1):14--30, 1982.

\bibitem{DFS05c}
Vadim Drabkin, Roy Friedman, and Marc Segal.
\newblock Efficient byzantine broadcast in wireless ad-hoc networks.
\newblock In {\em DSN}, pages 160--169. IEEE Computer Society, 2005.

\bibitem{DMT10ca}
Swan Dubois, Toshimitsu Masuzawa, and S\'{e}bastien Tixeuil.
\newblock The impact of topology on byzantine containment in stabilization.
\newblock In {\em Proceedings of DISC 2010}, {L}ecture {N}otes in {C}omputer
  {S}cience, Boston, Massachusetts, USA, September 2010. {S}pringer {B}erlin /
  {H}eidelberg.

\bibitem{DMT10cd}
Swan Dubois, Toshimitsu Masuzawa, and S\'{e}bastien Tixeuil.
\newblock On byzantine containment properties of the min+1 protocol.
\newblock In {\em Proceedings of SSS 2010}, {L}ecture {N}otes in {C}omputer
  {S}cience, New York, NY, USA, September 2010. {S}pringer {B}erlin /
  {H}eidelberg.

\bibitem{DMT11j}
Swan Dubois, Toshimitsu Masuzawa, and S\'{e}bastien Tixeuil.
\newblock Bounding the impact of unbounded attacks in stabilization.
\newblock {\em IEEE Transactions on Parallel and Distributed Systems (TPDS)},
  2011.

\bibitem{DMT11cb}
Swan Dubois, Toshimitsu Masuzawa, and S{\'e}bastien Tixeuil.
\newblock Maximum metric spanning tree made byzantine tolerant.
\newblock In David Peleg, editor, {\em Proceedings of DISC 2011}, Lecture Notes
  in Computer Science (LNCS), Rome, Italy, September 2011. Springer Berlin /
  Heidelberg.

\bibitem{K04c}
Chiu-Yuen Koo.
\newblock Broadcast in radio networks tolerating byzantine adversarial
  behavior.
\newblock In Soma Chaudhuri and Shay Kutten, editors, {\em PODC}, pages
  275--282. ACM, 2004.

\bibitem{LSP82j}
Leslie Lamport, Robert~E. Shostak, and Marshall~C. Pease.
\newblock The byzantine generals problem.
\newblock {\em ACM Trans. Program. Lang. Syst.}, 4(3):382--401, 1982.

\bibitem{MMR03j}
D.~Malkhi, Y.~Mansour, and M.K. Reiter.
\newblock Diffusion without false rumors: on propagating updates in a
  {Byzantine} environment.
\newblock {\em Theoretical Computer Science}, 299(1--3):289--306, April 2003.

\bibitem{MRRS01c}
D.~Malkhi, M.~Reiter, O.~Rodeh, and Y.~Sella.
\newblock Efficient update diffusion in byzantine environments.
\newblock In {\em The 20th {IEEE} Symposium on Reliable Distributed Systems
  ({SRDS} '01)}, pages 90--98, Washington - Brussels - Tokyo, October 2001.
  IEEE.

\bibitem{MT06cb}
Toshimitsu Masuzawa and S\'{e}bastien Tixeuil.
\newblock Bounding the impact of unbounded attacks in stabilization.
\newblock In Ajoy~Kumar Datta and Maria Gradinariu, editors, {\em SSS}, volume
  4280 of {\em Lecture Notes in Computer Science}, pages 440--453. Springer,
  2006.

\bibitem{MT07j}
Toshimitsu Masuzawa and S\'{e}bastien Tixeuil.
\newblock Stabilizing link-coloration of arbitrary networks with unbounded
  byzantine faults.
\newblock {\em International Journal of Principles and Applications of
  Information Science and Technology (PAIST)}, 1(1):1--13, December 2007.

\bibitem{CtrZ}
Alexandre Maurer and S\'{e}bastien Tixeuil.
\newblock Limiting byzantine influence in multihop asynchronous networks.
\newblock {\em IEEE International Conference on Distributed Computing Systems
  (ICDCS)}, 2012.

\bibitem{MS03j}
Y.~Minsky and F.B. Schneider.
\newblock Tolerating malicious gossip.
\newblock {\em Distributed Computing}, 16(1):49--68, 2003.

\bibitem{NA02c}
Mikhail Nesterenko and Anish Arora.
\newblock Tolerance to unbounded byzantine faults.
\newblock In {\em 21st Symposium on Reliable Distributed Systems (SRDS 2002)},
  pages 22--29. IEEE Computer Society, 2002.

\bibitem{NT09j}
Mikhail Nesterenko and S\'{e}bastien Tixeuil.
\newblock Discovering network topology in the presence of byzantine nodes.
\newblock {\em IEEE Transactions on Parallel and Distributed Systems (TPDS)},
  20(12):1777--1789, December 2009.

\bibitem{PP05j}
Andrzej Pelc and David Peleg.
\newblock Broadcasting with locally bounded byzantine faults.
\newblock {\em Inf. Process. Lett.}, 93(3):109--115, 2005.

\bibitem{SOM05c}
Yusuke Sakurai, Fukuhito Ooshita, and Toshimitsu Masuzawa.
\newblock A self-stabilizing link-coloring protocol resilient to byzantine
  faults in tree networks.
\newblock In {\em Principles of Distributed Systems, 8th International
  Conference, OPODIS 2004}, volume 3544 of {\em Lecture Notes in Computer
  Science}, pages 283--298. Springer, 2005.

\end{thebibliography}

\end{document}